\newcommand{\algsize}{\scriptsize}
\newcommand\Algphase[1]{%
\vspace*{-.7\baselineskip}\Statex\hspace*{\dimexpr-\algorithmicindent-2pt\relax}\rule{8.9cm}{0.4pt}%
\Statex\hspace*{-\algorithmicindent}\textbf{#1}%
\vspace*{-.7\baselineskip}\Statex\hspace*{\dimexpr-\algorithmicindent-2pt\relax}\rule{8.9cm}{0.4pt}%
}
\newcommand{\ceil}[1]{\ensuremath{\left\lceil #1 \right\rceil}}
\definecolor{mygray}{gray}{.9}
\begin{document}

\title{A Multi-Client Searchable Encryption Scheme for IoT Environment*}
%
%
\author{Nazatul H. Sultan \inst{1,2}\and Shabnam Kasra-Kermanshahi \inst{3} \and Yen Tran \inst{4}\and Shangqi Lai \inst{5}\and Vijay Varadharajan \inst{1}\and Surya Nepal \inst{2}\and Xun Yi \inst{3}}

%
\authorrunning{N. Sultan et al.}
%
\institute{University of Newcastle, Australia, \email{\{Nazatul.Sultan, Vijay.Varadharajan\}@newcastle.edu.au}
\and 
 CSIRO Data61, Australia, \email{Surya.Nepal@data61.csiro.au}
\and RMIT University, Australia,\email{\{shabnam.kasra.kermanshahi,xun.yi\}@rmit.edu.au} 
 \and
 UNSW Canberra, Australia, \email{ hongyen.tran@student.adfa.edu.au}
\and Monash University, Australia, \email{shangqi.lai@monash.edu}}


%
\maketitle 

	
\def \baselinestretch{.95}
\begin{abstract}
 The proliferation of connected devices through Internet connectivity presents both opportunities for smart applications and risks to security and privacy. It is vital to proactively address these concerns to fully leverage the potential of the Internet of Things. IoT services where one data owner serves multiple clients, like smart city transportation, smart building management and healthcare can offer benefits but also bring cybersecurity and data privacy risks. For example, in healthcare, a hospital may collect data from medical devices and make it available to multiple clients such as researchers and pharmaceutical companies. This data can be used to improve medical treatments and research but if not protected, it can also put patients' personal information at risk. To ensure the benefits of these services, it is important to implement proper security and privacy measures.
 In this paper, we propose a symmetric searchable encryption scheme with dynamic updates on a database that has a single owner and multiple clients for IoT environments. Our proposed scheme supports both forward and backward privacy. Additionally, our scheme supports a decentralized storage environment in which data owners can outsource data across multiple servers or even across multiple service providers to improve security and privacy. Further, it takes a minimum amount of effort and costs to revoke a client's access to our system at any time. The performance and formal security analyses of the proposed scheme show that our scheme provides better functionality, and security and is more efficient in terms of computation and storage than the closely related works\let\thefootnote\relax\footnotetext{*This version was submitted to ESORICS 2023.}.
  \end{abstract}
\keywords{IoT, data privacy, searchable encryption, dynamic, access control, revocation}





\section{Introduction}
\label{introduction}
With the rapid growth of the Internet of Things (IoT) and big data, massive amounts of data with different sources and environments are being produced worldwide per day ranging from health monitoring to smart buildings \cite{Abdulmalik2021}. The generated data are of great value and sensitive in nature, and it is essential to process, store, and manage these data securely and efficiently. Due to the advancements in Cloud Computing technologies, the generated huge volume of data can be stored and managed in remotely located storage servers, while getting easy accessibility, better availability, low initial investment costs, etc. \cite{Hayes2008}. As the data are now being outsourced to external storage servers, the data owners (who own the data) lose control over them and can no longer protect them like their own local machines. This gives full dependency on the third party, also commonly known as \emph{service provider} for safe keeping of the outsourced data. But it brings other challenges like trust and security capability issues associated with the service provider. As the outsourced data contains sensitive information, like electronic health records, and personal information like driving licence, car number plate, home addresses, etc., the service provider might itself want to gain access to this sensitive information for various motivations\footnote{The service provider might want to sell acquired sensitive information of the data owners to other interested parties for monetary benefits.}. Further, the regular occurrence of data breach incidents raises questions about the capabilities of the service providers to maintain full-proof data security and privacy mechanisms. One solution to resolve this issue is to encrypt the data by the data owners before outsourcing it to the service providers. This way the data owners can share their sensitive data with authorized users by sharing secret keys while keeping the data safe from unauthorized entities including the service provider. 
\par 
However, this process brings another set of challenges, including difficulty in keyword search. In a keyword search, the data owner (or any clients authorized by the data owner) should be able to retrieve the desired data from the cloud storage servers without revealing any sensitive information about the searched keywords and the associated data/files to the servers. Searchable Encryption (SE) is a promising cryptographic technique that enables the data owners to outsource encrypted data in the cloud storage servers while allowing the authorized clients to delegate keyword search capabilities over the encrypted data to the servers without revealing any sensitive information of the searched keywords and the actual plaintext data \cite{Song2000}. The first practical SE scheme was proposed by Song \emph{et al.} in \cite{Song2000}. Afterwards, many schemes have been proposed to address various security issues and functionalities \cite{Bosch2014}- typically they are based on \emph{Symmetric Searchable Encryption} (SSE) and \emph{Asymmetric Searchable Encryption} (ASE). The fundamental difference between SSE and ASE is the use of symmetric-key and public-key cryptographic primitives, respectively. 
It has been observed that although ASE schemes can provide better flexibility and query expressiveness, ASE schemes are computationally expensive due to the use of expensive public-key cryptographic operations.
As such, ASE-based schemes are not suitable for IoT environments due to the limited resources (i.e., computational and storage resources) of the IoT devices. 
The SSE schemes are considered more efficient and practical for IoT environments due to the use of lightweight symmetric-key cryptographic operations \cite{Bosch2014}. However, most of the SSE schemes, 
only support single-owner and single-client scenarios, where the data owner can only perform keyword search operations over his/her outsourced encrypted data. However, this category of schemes is not suitable for an IoT environment with numerous users, where the data owner allows multiple users to perform keyword search queries for accessing the shared data in the cloud server \cite{Cui2021}. We observe that the single-owner and multi-client-based SSE scheme is more suitable for such IoT environments. 
Further, most of the SSE schemes, 
consider static databases which means they can't be updated easily or require re-encryption and re-uploading of the encrypted files after the initial setup. There are several advantages to having a dynamic database that supports the addition and deletion of encrypted files. It provides more flexibility and supports more real-world applications. The downside, however, is that it also introduces a new set of security risks, since more data is exposed. Bost et al. \cite{Bost2017} introduced the concept of forward and backward privacy in order to capture leakage in a dynamic setting. The dynamic SSE schemes \cite{zuo2019dynamic,zuo2020forward,kermanshahi2020geometric} that support both forward and backward security properties are not intended for multi-client settings, hence there is no access control mechanism in place. 
In this paper, we propose an efficient and secure single-owner and multi-client SSE scheme that supports a dynamic encrypted database with both forward and backward privacy for IoT environments. The major contributions of our proposed scheme are as follows:
\begin{itemize}
    \item Our scheme supports single-owner and multi-client settings. It enables the IoT data owner to delegate keyword-level search authorization to more than one client efficiently. It employs lightweight cryptographic operations that make it ideal for IoT environments.
    \item Our scheme also supports a dynamic encrypted database, which enables the IoT data owner to add and delete files at any time. The database updating operation also preserves both forward and backward privacy (our scheme has minimal leakages).
    \item Our scheme supports a decentralized storage environment, where the IoT data owner can outsource data in multiple servers or even in multiple service providers for achieving higher-level of security and privacy\footnote{In general, IoT data originates from various sources. It is recommended that the generated data should be stored in a decentralized platform because of regulations and privacy concerns \cite{Stolpe2016}.}.
    \item User revocation is supported in our design. Most importantly, the revocation operation does not require any computationally expensive operations, including re-encryption of the database. 
    \item Our performance and formal security analyses show that our scheme is more efficient in terms of computation and communication overhead and provides better security and functionality than closely related schemes.
\end{itemize}
\section{Related Work}\label{related-work}
It was Song et al. in \cite{Song2000} who first created a practical privacy-preserving keyword search scheme.
Curmola et al. \cite{curtmola2011searchable} introduced the symmetric searchable encryption scheme based on inverted indexing in 2011. The P3 scheme of Shen et al. \cite{shen2018secure} provides intelligent encrypted data processing in IoT cloud systems. To manage the location relationship of multiple queried keywords over encrypted data, homomorphic encryption was used along with bilinear maps. 
Using IoT devices, Guo et al. \cite{guo2018secure} constructed secure searchable encryption for range search. To encrypt their data, they use homomorphic and order-preserving encryption (OPE) along with a secure index built from the k-dimensional tree. 
Dynamic searchable encryption was proposed in order to better match real-world scenarios. 
The work published by Lipsdonk et al. \cite{van2010computationally} proposes a computationally efficient scheme for searchable symmetric encryption with dynamic updates to which the number of updates is finite. As a worst-case scenario, the search time is linearly dependent on the database size.
The inverted index approach from \cite{curtmola2011searchable} was used by Kasra-Kamaranshahi et al. \cite{kamara2012dynamic} to achieve a sub-linear search time by creating an SSE scheme that supported dynamic updates. 
This is followed by the use of Red-Black trees to construct the secure index in \cite{kamara2013parallel}, which is capable of simultaneously allowing for keyword searching and data updating. 
Based on blind storage, Naveed et al. \cite{naveed2014dynamic} proposed a new dynamic system with less information leakage, and the cloud server cannot track how many files are stored. 

Because of dynamic data updates using the leakage profile of a scheme, file injection attacks \cite{zhang2016all} may occur. Thus, there is an even greater need for forward security for data clients. In order to ensure forward security, the newly injected files cannot match previous trapdoors when dynamic data is being updated. Consequently, forward security has become a necessary component of searchable encryption schemes. 
For the first time, Stefanov et al. \cite{stefanov2013practical} developed a dynamic SE scheme that achieves forward security. During the update of the search trapdoor, Bost \cite{bost2016ovarphiovarsigma} introduces a novel scheme called $\Sigma o\varphi$o$\varsigma$ that uses only one-way permutations. 
In spite of the efficiency of $\Sigma o\varphi$o$\varsigma$, the use of the public key primitive makes it more computationally demanding. 
Besides forward security, searchable encryption schemes require backward security as well. Backward security ensures the security of the database and its updates during search queries. 
As a general rule, search queries should not expose corresponding deleted documents.
By using constrained pseudo-random functions and puncturable encryption primitives, Bost et al. \cite{bost2017forward} presented several forward and backward secure SE schemes. 
A series of dynamic SSE schemes provided by Zuo et al. \cite{zuo2019dynamic,zuo2020forward} possesses both forward and backward security properties by combining different cryptographic primitives. 
In a dynamic SSE scheme, Kasra Kermanshahi et al. \cite{kermanshahi2020geometric} provided forward and backward security and geometric range search. 


\begin{figure}[t]
		\centering
		\fbox{\scalebox{3}{\includegraphics[width=1.5cm, height=.8cm]{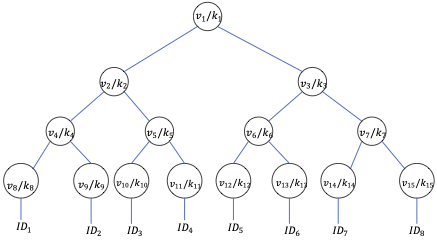}}} 
		\caption{Tree Based Group Key Distribution}
		\label{tree}
	\end{figure}
\section{Preliminaries}
\label{preli}
This section presents some of the utilized concepts in our scheme. The frequently used notations are shown in Table \ref{notation}. 
\begin{table}[h]
	\tabcolsep 1.0pt
	\centering
	\caption{NOTATIONS}
 \algsize{
	\begin{tabular}{p{1.8cm}p{10cm}}
		\hline
		Notation  & Description
		\\[0.5ex]    \hline
		$\mathbb{U}$, $\mathbb{A}$ & a set of clients and attribute universe respectively  \\\hline 
		$\mathcal{U}_i$ & set of clients authorized to access keyword $w_i$\\\hline
		$\mathbb{W}, \Gamma_{w}$ & set of keywords and having common documents with $w$ respectively\\\hline
		$\mathcal{H}, H_1$& chameleon hash function and Hash function $H_1: \{0, 1\}^*\rightarrow \mathbb{Z}_q^*$, respectively\\\hline
		$st^{dht}_{att_i}$& state of the distributed hash table for the $i^{th}$ attribute and keyword $w$, respectively\\\hline
		$st^j_w$ & $j^{th}$ state of the keyword $w$ in the database\\\hline
		$\mathtt{k}_{id_i}$, $\mathtt{k}_{att_i}$& $i^{th}$ client key and attribute key, respectively\\\hline
		$\mathtt{K}^{1w}_j, \mathtt{K}^{2w}_j, \mathtt{K}^{3w}_j$& $j^{th}$ version of the keyword key for $w$; $j$ starts from $0$\\\hline
	\end{tabular}}
	\label{notation}
\end{table}
\subsection{Group Key Distribution Method}
\label{key-distribution}
Our scheme uses a tree-based group key assignment scheme similar to the one proposed in \cite{Naor2001} for the clients. Figure \ref{tree} shows a binary tree for the universe of clients, $\mathbb{U}$. Each node in the tree represents a secret key, say $i^{th}$ node represents key $k_i$. We called it \emph{node key}. A set of node keys from the root to a leaf node is called \emph{path key}. Every client in $\mathbb{U}$ is assigned with a leaf node and associated \emph{path key} in the tree. For example, the client $\mathtt{ID}_1$ is assigned the path key, $\mathtt{pathkeys}_1= \{k_1, k_2, k_4, k_8\}$. Similarly, the client $\mathtt{ID}_6$ is assigned the path key, $\mathtt{pathkeys}_6= \{k_1, k_2, k_6, k_{13}\}$. Our key assignment method is secure, as keys are assigned randomly and independently from each other. More details can be found in \cite{Naor2001}.

\subsection{Symmetric Additive homomorphic encryption}
\label{Homomorphic-Encryption}
 Our scheme uses a slightly modified version of the additive homomorphic encryption scheme defined in \cite{Papadimitriou2016}. We replace the addition modulo $n$ with the exclusive-or, "XOR" ($\oplus$) operation for the addition modulo $2$ which can satisfy the requirements in our proposed scheme.
\par 
Let $\mathcal{F}: \{0, 1\}^\lambda\leftarrow K \times I$ be a pseudo-random function (PRF), where $\mathtt{K}$ be the set of the secret keys and  $I$ be the set of identifiers, and the plaintexts $m\in \mathbb{Z}_2: \{0, 1\}^{l}$. We define an additive homomorphic encryption scheme $\mathtt{E}:(\mathtt{Enc}, \mathtt{Dec})$ as follows:
\begin{align}
    C=& \mathtt{Enc_K}(m, i)= \left(\left(m\oplus \mathcal{F}(\mathtt{K}, i)\oplus \mathcal{F}(\mathtt{K}, i-1)\right), i\right)\\
    m=& \mathtt{Dec_K}(C, i)= C\oplus\mathcal{F}(\mathtt{K}, i)\oplus \mathcal{F}(\mathtt{K}, i- 1)
\end{align}
Let $(C_1, i-1)= \mathtt{Enc_K}(m_1, i-1)$ and $(C_2, i)= \mathtt{Enc_K}(m, i)$ are the two ciphertexts that we want to perform homomorphic addition operation, we have
\begin{align}
    (C_1, i- 1)\oplus(C_2, i)=& (C_1\oplus C_2, i):= \mathtt{Enc_K}\left((m_1\oplus m_2), i\right)
\end{align}

\subsection{Dynamic Searchable Symmetric Encryption (DSSE)}\label{sec:sse}
We follow the database model given in the paper \cite{bost2016ovarphiovarsigma}.
A database is a collection of  (index, keyword set) pairs 
denoted as $\mathbb{DB}=(ind_i,\mathbb{W}_i)_{i=1}^d$, where $ind_i\in\{0,1\}^{\ell}$ and $\mathbb{W}_i\subseteq\{0,1\}^*$. 
The set of all keywords of the database $\mathbb{DB}$ is $\mathbb{W}=\cup_{i=1}^d\mathbb{W}_i$, 
where $d$ is the number of documents in $\mathbb{DB}$. 
We identify $|\mathbb{W}|$ as the total number of keywords and 
$N=\Sigma_{i=1}^d|\mathbb{W}_i|$ as the number of document/keyword pairs. 
We denote $\mathbb{DB}(w)$ as the set of documents that contain a keyword $w$. 
To achieve a sublinear search time, 
we encrypt the file indices of $\mathbb{DB}(w)$ corresponding to the same keyword $w$ (a.k.a. inverted index).

A DSSE scheme $\Gamma$ consists of an algorithm $\mathrm{Setup}$ (($\mathbb{EDB}, \sigma) \leftarrow \mathrm{Setup}(\mathbb{DB}, 1^{\lambda}$)) and two protocols $\mathrm{Search}$ (($\mathcal{I}; \perp) \leftarrow \mathrm{Search}(q, \sigma; \mathbb{EDB})$) and $\mathrm{Update}$ ($(\sigma'; \mathbb{EDB}') \leftarrow \mathrm{Update}(\sigma, op, in; \mathbb{EDB})$). 

\subsubsection{DSSE Leakage Profile}

In this section, we define the general leakage functions, $\mathcal{L}$, associated with dynamic searchable symmetric encryption schemes \cite{bost2017forward}. 
\begin{itemize}
    \item $\mathrm{sp}(w)=\{ u:(u,w) \in Q \}$ is the search pattern which shows two search queries pertain to the same keyword, $w$. This leakage function records the list $Q$ of every search query, in the form $(u,w)$, where $u$ is the timestamp (increases with every query).
    \item $\mathrm{UpHist}(w)$ is a history which outputs the list of all updates on keyword $w$. Each element of this list is a tuple $(u, op, ind)$, where $u$ is the timestamp of the update, $op$ is the operation, and $ind$ is the updated index.
    \item $\mathrm{TimeDB}(w)$ is the list of all documents matching $w$, excluding the deleted ones, together with the timestamp of when they were inserted in the database.
    \item $\mathrm{Updates}(w)$ is the list of timestamps of updates on $w$.
\end{itemize}

\subsection{Chameleon-hash Functions}
\label{Chameleon hash}
Chameleon-hash functions \cite{krawczyk1998chameleon} also known as trapdoor-hash functions are the hash functions which have a trapdoor allowing one to find arbitrary collisions in the domain of the functions. However, as long as the trapdoor is not known, chameleon-hash functions are collision resistant. 
%
%
A chameleon-hash function $\mathrm{CH}$ consists of the following algorithms:

\begin{itemize}
\item $\mathrm{CH.Setup()} \xrightarrow{} (p, q, g, \mathrm{sk}, \mathrm{pk})$: This algorithm first chooses two large prime numbers $p$ and $q$ such that $p=kq+1$ for an integer $k$. Then, selects $g$ of order $q$ in $\mathbb{Z}^*_p$. Finally, it outputs $\xi \in \mathbb{Z}_q ^*$ as the private key $\mathrm{sk}$ and $y=g^{\xi} \mod p$ as the public key $\mathrm{pk}$.

 \item  $\mathrm{CH.Hash(x, r)} \xrightarrow{} g^xy^r \mod p$: On an input value $x$, a random value $r \in \mathbb{Z}_q ^*$ and outputs $H_{\mathrm{pk}}(x,r)=g^xy^r \mod p$.
 
 \item $\mathrm{CH.Forge(x, x', r)} \xrightarrow{} r'$ (Trapdoor collision): Given $x, x',r \in \mathbb{Z}^* _q$ as input, this algorithm outputs $r'$ such that $H_{\mathrm{pk}}(x,r)=H_{\mathrm{pk}}(x',r')$. This is done by solving for $r'$ in $x+\xi r=x'+\xi r' \mod q$.  
\end{itemize}



\begin{definition}[Indistinguishability]
For all pairs of message $x$ and $x'$, the probability distribution of the random value $H_{\mathrm{pk}}(x,r)$ and $H_{\mathrm{pk}}(x',r)$
are computationally indistinguishable.

\end{definition}

\begin{definition}[Collision-Resistance]
Without the knowledge of trapdoor key $sk$, there exists no efficient algorithm that, on input $x$, $x'$, and a random string $r$, outputs a string $r'$
that satisfy $H_{\mathrm{pk}}(x,r)=H_{\mathrm{pk}}(x',r')$, with non-negligible probability.

\end{definition}

\subsection{Bitmap Index}
\label{bit-string-representation}
Our scheme uses a notion called bitmap index, where a string is used to represent the presence of a keyword in a document. Bitmap index has been widely used in the database community as a special kind of data structure. 
A similar concept has also been used in \cite{Zuo2019}. In our scheme, each keyword, $w$ is associated with a bit string, $\mathtt{S}_{w_i}$ of length say $\gamma$, which is the maximum number of files that can be supported. Each bit in the bit string $\mathtt{S}_{w_i}$ represents a file in the database. If the $j^{th}$ bit of the bit string $\mathtt{S}_{w_i}$ of the keyword $w_i$ is $"1"$, it represents that the $j^{th}$ file in the database contains the keyword $w_i$. On the other hand, if the $j^{th}$ bit is $"0"$, it represents $j^{th}$ file in the database that does not contain the keyword $w_i$. To illustrate it further, let's assume $6$ files $f_0, f_1, \cdots, f_5$ in a database. Let's also assume that initially $w_1$ and $w_2$ keywords have the files $f_0, f_3, f_5$ and $f_0, f_2, f_4$ respectively. For example, the bit string representations ($100101$ for the keyword $w_1$ and $101010$ for the keywords $w_2$). Suppose, we want to delete the file $f_0$ from the keyword $w_1$. We can do it by flipping the $0^{th}$ bit in the $w_1$'s bit string $100101$ from "1" to "0". Similarly, if we want to add a file, say $f_3$ to the keyword $w_2$, we can flip the $4^{th}$ bit position of the $w_2$'s bit string $101010$ from $0$ to $1$. We can easily flip the bits in a bit string using the standard "Exclusive OR" (XOR) operations. We can generate an update bit string where the position of the bit to be changed is equal
to ``1"  and the rest are ``0”s. Finally, we can perform the  XOR operation between the original bit string and the updated bit string where $f_0$ is deleted using the update string $100000$ from the keyword $w_1$ and $f_3$ is added using the update string $000100$ to the keyword $w_2$. We can observe that we can easily encrypt the bit strings using the symmetric additive homomorphic encryption scheme as defined in Section \ref{Homomorphic-Encryption} and perform update operations to reflect the addition or deletion of files (due to the homomorphic property). 

\section{Our Proposed Scheme}
\label{proposed-scheme}
Our proposed scheme consists of four main phases, namely \emph{System Initialization}, \emph{Keyword Search}, \emph{Database Update}, and \emph{Client Revocation}. 
\begin{algorithm}[H]
\caption{System Initialization} 
        \label{Setup}
  \algsize  
\hspace*{\algorithmicindent} \textbf{Input} Security parameter $1^\lambda$, group key distribution tree $\mathcal{T}$, attribute universe $\mathbb{U}_{\mathbb{A}}$, keyword set $\mathbb{W}$, document identifier set $\mathbb{DB}(\mathbb{W})$\\
 \hspace*{\algorithmicindent} \textbf{Output} Public parameter $\mathtt{PP}$, master secret $\mathtt{MS}$, encrypted database $\mathbb{EDB}$
\begin{algorithmic}[1]
         \Algphase{Phase 1: Data Owner Setup \& Key Generation}
		\State Data owner chooses a random secret key $\mathtt{MK}\in \mathbb{Z}_q^*$ for PRF $\mathcal{F}$\;
		\For{each node $v_i$ in $\mathcal{T}$}
		    \State Data owner computes a node key, $k_i\leftarrow \mathcal{F}(\mathtt{MK}, v_i)$\;
		 \EndFor
		\For{each client $\mathtt{ID}_i$ in $\mathbb{U}$}
		\State Data owner computes client key $\mathtt{k}_{id_i}\leftarrow \mathcal{F}(\mathtt{MK}, id_i)$ \;
		\State Data owner selects a random public label $l_{id_i}\in \{0, 1\}^\lambda$\;
        \State Data owner sets a path key, $\mathtt{pathkey}_{id_i}= (k_1|| \dots|| k_j|| \dots|| k_{h})$\;
        \State Data owner computes a public path key token $\mathtt{PathKeyToken}_{id_i}= \mathtt{pathkey}_{id_i}\oplus \mathcal{F}(l_{id_i}, \mathtt{k}_{id_i})$\;
		\EndFor
        \Algphase{Phase 2: EDB Generation}
        \For{each $att_i$ in $\mathbb{A}$}
        \State Data owner computes $\mathtt{k}^{dht}_{att_i}\leftarrow \mathcal{F}(\mathtt{MK}, st^{dht}_{att_i} || att_i)$\;
        \State $S_{att_i}\leftarrow \mathcal{F}(\mathtt{k}^{dht}_{att_i}, att_i)$\;
        \State Append $S_{att_i}$ to DHT\;
        \EndFor
		\For{each $att_i\in \mathbb{A}$}
		    \State Data owner computes an attribute key $\mathtt{k}_{att_i}\leftarrow \mathcal{F}(\mathtt{MK}, att_i)$\;
		    \State Data owner initializes an empty map $\mathcal{D}_\mathbb{W}$\;
		        \For{each $w$ in $\mathbb{W}_{att_i}$}
                    \State Data owner selects a random public label $l_{w}\in \{0, 1\}^\lambda$\;
		            \State Data owner sets state of the keyword, $st^j_w= st^{j-1}_w+ 1$\;
		            \State Data owner computes keyword keys $\mathtt{K}^{1w}_j\leftarrow H_1(\mathcal{F}(\mathtt{k}_{att_i},w|| 0|| st^j_{w}))$, $\mathtt{K}^{2w}_j\leftarrow H_1(\mathcal{F}(\mathtt{k}_{att_i},w||1|| st^j_{w}))$, $\mathtt{K}^{3w}_j\leftarrow H_1(\mathcal{F}(\mathtt{k}_{att_i}, w||2|| st^j_{w}))$\;
		            \State Data owner chooses a random number $\mathbbm{r}_j^w\in \mathbb{Z}_q^*$\;
		            \State Data owner computes $add^j_w\leftarrow \mathcal{H}(\mathtt{K}^{1w}_j, \mathbbm{r}_j^w)$\;
		            \State Data owner compute $r^w_j \xleftarrow{}\mathrm{CH.Forge}(\mathtt{K}^{1w}_j, \mathtt{K}^{2w}_j, \mathbbm{r}_j^w)$;
		            \State Data owner generates a bit string $\mathtt{S}^{w}$ to represent all files associated with $w$ as defined in Section \ref{bit-string-representation}\;
		            \State Data owner encrypts $\mathtt{S}^w$ with $\mathtt{K}^{3w}_j$, i.e., $e^w\leftarrow \mathtt{Enc}_{\mathtt{K}^{3w}_j}(\mathtt{S}^{w}, st^j_w)$\;
		            \State Data owner appends $\{r^w_j, e^w\}$ to $\mathcal{D}_\mathbb{W}[add^j_w]$ \;
     	      \EndFor
     	\EndFor
     	   \State \Return{$\mathtt{PP}= \big<\text{DHT}, \mathcal{F}, \mathcal{H}, H_1, \{l_{id}, \mathtt{PathKeyToken}_{id}\}_{\forall \mathtt{id}\in \mathbb{U}}, \big>, \mathtt{MS}= \big<\mathtt{MK}, \{\mathbbm{r}_j^w\}_{\forall w\in \mathbb{W}}, \{l_w, w \in \mathbb{W}\}\big>$, $\mathbb{EDB}= \big<\mathcal{D}_{\mathbb{W}}\big>$}
 	\end{algorithmic}
 \end{algorithm}
\subsection{System Initialization}
\label{system-initialization}
Our proposed scheme starts with the System Initialization phase. The processing steps of this phase are shown in Algorithm \ref{Setup}. The main purpose of this phase is to generate public parameters $\mathtt{PP}$, master secrets $\mathtt{MS}$, client related keys, and encrypted database $\mathbb{EDB}$. The data owner initiates this phase after taking a security parameter $1^\lambda$, group key distribution tree $\mathcal{T}$, attribute universe $\mathbb{A}$, keyword set $\mathbb{W}$, and all document identifier set $\mathbb{D}$. It has two main sub-phases, which are presented next. 
\par 
\subsubsection{Data Owner Setup \& Key Generation}
In this sub-phase, the data owner generates its master secret key $\mathtt{MK}$, group key distribution tree $\mathcal{T}$, and assigns secret keys to the clients. First, the data owner computes a key, termed as node key, $k_i$ by taking the master key $\mathtt{MK}$ and node's identity $v_i$ as input to the PRF, $\mathcal{F}$ for each node in the tree $\mathcal{T}$. Secondly, the data owner generates a secret key, $\mathtt{k}_{id_i}$ for each registered client $\mathtt{ID}_i$ and sends it securely to the corresponding client. Thirdly, the data owner assigns a leaf node in the tree $\mathcal{T}$ with each registered client. The data owner then assigns the path-key $\mathtt{pathkey}_{id_i}$ (please refer to Section \ref{key-distribution}) to the client $\mathtt{ID}_i$. The data owner shares the path-keys of the clients by publishing public path-key tokens, $\mathtt{PathKeyToken}_{id_i}$. Note that, the client $\mathtt{ID}_i$ can recover the path key $\mathtt{pathkey}_{id_i}$ using his/her secret key $\mathtt{k}_{id_i}$ and public label $l_{id_i}$ (detail are given in the rest of this section). 

\subsubsection{EDB Generation}
In this sub-phase, the data owner mainly generates the encrypted database, $\mathbb{EDB}$ for the keyword set $\mathbb{W}$, and Distributed Hash Table (DHT) for enabling distributed search to the clients. Our scheme uses map data structure, $\mathcal{D}_\mathbb{W}$ to locate the encrypted files associated with the searched keywords in a database. The data owner first divides the data into different attribute groups, so that a similar set of data can be stored in the same servers for making search simpler. Note that the keyword search queries or any further processing like analytic over the IoT data are commonly conducted over certain or correlated attributes \cite{Yuan2019}. With this observation, our scheme uses a partitioning algorithm that divides the database based on attributes \cite{Yuan2017}. This eventually enables to store the encrypted data with the same attributes at the same server(s)/CSPs. The data owner uses DHT for enabling authorized clients to locate the appropriate server(s) of their desired data by storing the addresses of the server(s)/CSPs. To achieve it, the data owner generates the DHT which contains addresses, $S_{att_i}$ of the servers/CSPs that contain attribute category $att_i\in \mathbb{A}$. Note that, the addresses $S_{att_i}$ are computed using $\mathtt{k}^{dht}_{att_i}$ associated with the attribute category $att_i$ (please refer to Lines $12, 13$ in Algorithm \ref{Setup}), and $\mathtt{k}^{dht}_{att_i}$ will be shared with the authorized clients for enabling them to recompute the addresses. Whenever a client wants to search data of an attribute category, the client first gets the address(es) of the server(s)/CSPs from the DHT. This eventually provides the data owner with a distributed storage platform having the ability to search. 
\par 
Once the DHT is generated, the data owner encrypts the keywords and the associated file identifiers. The data owner first generates a unique random secret key, $\mathtt{k}_{att_i}$ for each attribute category, $att_i$. Let's assume that the data owner wants to store keyword $w$ of attribute category $att_i$ and associated file identifiers in the encrypted database $\mathbb{EDB}$. The data owner generates three keyword keys $(\mathtt{K}^{1w}_{j}, \mathtt{K}^{2w}_{j}, \mathtt{K}^{3w}_{j})$ for encryption purposes of the keyword $w$ using the attribute key, $\mathtt{k}_{att_i}$ (please refer to Line $22$ in Algorithm \ref{Setup}). Now the data owner first computes the location $add^j_w$ in the map $\mathcal{D}_{\mathbb{W}}$ to store encrypted keyword information. The data owner uses a random number $\mathbbm{r^w_j}$, keyword-key $\mathtt{K}^{1w}_{j}$ as input to the chameleon hash function $\mathcal{H}$ for computing the address $add^j_w$ in the map $\mathcal{D}_{\mathbb{W}}$ (please refer to Line $24$ in Algorithm \ref{Setup}). The data owner also computes $r^w_j$ using the properties of chameleon hash function by using $(\mathtt{K}^{1w}_{j}, \mathbbm{r^w_j})$ and $\mathtt{K}^{2w}_{j}$ as input (please refer to Line $25$ in Algorithm \ref{Setup}). The main reason for this step is to enable the server to check the revocation status of a client. If a client is able to produce the same output as $\mathcal{H}(\mathtt{K}^{1w}_{j}, \mathbbm{r^w_j})$, then the client is considered as authorized; otherwise, the client is revoked or unauthorized. We shall explain it in more detail in Section \ref{keyword-search1}. Afterward, the data owner generates the bit string $\mathtt{S}_w$, as described in Section \ref{bit-string-representation}, associated with the keyword and encrypts it using the homomorphic encryption algorithm (described in Section \ref{Homomorphic-Encryption}) with the keyword key $\mathtt{K}^{3w}_{j}$. Finally, the data owner appends the ciphertext $e^w$ and $r^w_j$ into the map in the location $\mathcal{D}_{\mathbb{W}}[add^j_w]$. 
 \begin{algorithm}[H]
\caption{Keyword Search}
        \label{keyword-search-algo}
  \algsize  
  \hspace*{\algorithmicindent} \textbf{Input} Master secret $\mathtt{MS}$, public parameter $\mathtt{PP}$, group key distribution tree $\mathcal{T}$, keyword $w_1$, encrypted database $\mathbb{EDB}$\\
 \hspace*{\algorithmicindent} \textbf{Output} Document identifier set $\mathbb{DB}$
\begin{algorithmic}[1] 
    \Algphase{Phase 1: Client Authorization}
    \For{each node $d$ in $\mathtt{RootsSubTrees}(\mathcal{U}_i)$}
        \State Data owner computes $\mathtt{PubToken}^{w_i}_d= (\mathtt{k}^{dht}_{att_i}|| \mathtt{K}^{1w_i}_j|| \mathtt{K}^{2w_1}_j|| \mathtt{K}^{3w_i}_j|| \mathbbm{r}_j^{w_i})\oplus \mathcal{F}(k_d, l_{w_i})$
        \State Data owner stores $\mathtt{PubToken}^{w_i}_d$ in its public bulletin board
    \EndFor
    \Algphase{Phase 2: Trapdoor Generation} 
       \State The client $\mathtt{ID}$ gets $\mathtt{PubToken}^{w_i}_{d}$ from the public bulletin board of the data owner
        \If {$\mathtt{ID}\in \mathcal{U}_i$} 
            \State The client $\mathtt{ID}$ can recover the keyword keys $(\mathtt{k}^{dht}_{att_i}|| \mathtt{K}^{1w_i}_j|| \mathtt{K}^{2w_i}_j|| \mathtt{K}^{3w_i}_j|| \mathbbm{r}_j^{w_i})\leftarrow \mathtt{PubToken}^{w_i}_{d}\oplus \mathcal{F}(k_d, l_{w_i})$ using his/her path key component $k_d$ 
            \State The client computes $S_{att_i}\leftarrow \mathcal{F}_1(\mathtt{k}^{dht}_{att_i}, att_i)$
            \State The client computes $\mathtt{trap1}= \mathcal{H}(\mathtt{K}_j^{1w_i}, \mathbbm{r}_j^{w_i}); \mathtt{trap2}= \mathtt{K}_j^{2w_i}$ 
        \EndIf
     \State Finally, the client sends the trapdoors $\mathtt{Trap}= \big<\mathtt{trap1}, \mathtt{trap2}\big>$ to the server $S_{att_i}$
    \Algphase{Phase 3: Search}  
    \State The server gets $\{r_j^{w_i}, e^{w_i}\}\leftarrow \mathcal{D}_{\mathbb{W}}[\mathtt{trap_1}]$
    \If{$\mathtt{trap_1}\neq \mathcal{H}(\mathtt{trap2}, r^{w_i}_j)$}
        \State Aborts
    \EndIf
     \If{$\mathtt{trap_1}== \mathcal{H}(\mathtt{trap2}, r^{w_i}_j)$}
     \State Server sends $e^{w_i}$ to the client
    \EndIf
        \State The client gets $S^{w_i}\leftarrow e^{w_i}$ using $\mathtt{K}^{3w_i}_j$
        \State The client sets a file identifier set $\mathcal{DID}^{w_i}$
        \State The client appends all the file identifiers into the set $\mathcal{DID}^{w_i}\leftarrow S^{w_i}$ 
    \State Finally, the client sends $\mathcal{DID}^{w_i}$ to the server to fetch the actual encrypted files
    \State \Return{$\mathbb{DB}(w_i)$}
 	\end{algorithmic}
 \end{algorithm}

\subsection{Keyword Search}
\label{keyword-search1}
The main goal of this phase is to enable the authorized clients to generate the trapdoors for their desired keywords and also enable the server to retrieve the requested file identifiers using the trapdoors on behalf of the clients. Algorithm \ref{keyword-search-algo} shows the processing steps of this phase. This phase has three sub-phases, \emph{Client Authorization}, \emph{Trapdoor Generation} and \emph{Search}, which are explained next.
\subsubsection{Client Authorization} 
\label{client-authorization}
The first challenge for the data owner is to provide proper security credentials to the authorized clients for performing the keyword search. In this sub-phase, the data owner shares secret keys with the authorized clients, so that they can compute proper trapdoors for the keyword search of their desired keywords. The data owner also shares the $\mathtt{k}^{dht}_{att_i}$ to enable the authorized client for locating the server(s) that stores the requested documents. Suppose, $\mathcal{U}_i\subseteq \mathbb{U}$ be the set of clients, that appears in the group key assignment tree $\mathcal{T}$, are authorized to access keyword $w_i$. The data owner selects root nodes, $\mathtt{RootsSubTrees}(\mathcal{U}_i)$ of the minimum cover sets in the group key assignment tree $\mathcal{T}$ that can cover all of the leaf nodes associated with the clients in $\mathcal{U}_i$. The data owner then computes a public-token $\mathtt{PubToken}^{w_i}_d$ for each of the nodes $d$ in the minimum cover set $\mathtt{RootsSubTrees}(\mathcal{U}_i)$ to share $\mathtt{k}^{dht}_{att_i}$, keyword keys $(\mathtt{K}^{1w_i}_j, \mathtt{K}^{2w_1}_j, \mathtt{K}^{3w_i}_j)$, and the secret random number $\mathbbm{r}_j^{w_i}$ (please refer to Line 2 in Algorithm \ref{keyword-search-algo}). 
\subsubsection{Trapdoor Generation} In this sub-phase, the clients compute trapdoors of their desired keywords. Suppose, an authorized client $\mathtt{ID}_u$ wants to perform the keyword search for the keyword $w_1$. The client first needs to recover the keys (\emph{dht-key} $\mathtt{k}^{dht}_{att_i}$, keyword keys $(\mathtt{K}^{1w_i}_j, \mathtt{K}^{2w_i}_j, \mathtt{K}^{3w_i}_j)$, and secret random number $\mathbbm{r}_j^{w_i}$) to generate the trapdoors from the public-token of the keyword $w_i$, i.e., $\mathtt{PubToken}^{w_i}_d$. Note that, if the client is authorized, the client will have a common node-key(s) with $\mathtt{RootsSubTrees}(\mathcal{U}_i)$, say the common node key is $k_d$. The client can recover the keys from the public-token $\mathtt{PubToken}^{w_i}_d$ by performing XOR operation (i.e., $\oplus$) using $\mathcal{F}(k_d, l_{w_i})$ (please refer to Line 7 in Algorithm \ref{keyword-search-algo}). For example, if $\mathcal{U}_1= \{\mathtt{ID}_1, \mathtt{ID}_2, \mathtt{ID}_3, \mathtt{ID}_4, \mathtt{ID}_7, \mathtt{ID}_8\}$ in Figure \ref{tree}, $\mathtt{ID}_4$ can recover the keyword keys using the node-key $k_2\in \mathtt{pathkeys}_4$. This also implies that any client $\mathtt{ID}_j\notin \mathcal{U}_i$ cannot recover the keys. After getting the keys, the client gets the address of the server, $S_{att_i}$ using the \emph{dht-key}, and then computes the trapdoor $\mathtt{Trap}= \big<\mathtt{trap1, trap2}\big>$ (please refer to Lines $9, 10, 11$ in Algorithm \ref{keyword-search-algo}). Finally, the client sends the trapdoor $\mathtt{Trap}$ to the server(s) associated with the address(es) of $S_{att_i}$ in the DHT.
\subsubsection{Search}
In this sub-phase, the server(s) performs the actual keyword search operation over the encrypted database. The server first gets the location of the ciphertexts of the searched keyword using the trapdoor $\mathtt{trap1}$ (please refer to Line $12$ in Algorithm \ref{keyword-search-algo}). It then verifies if the trapdoor is revoked by comparing $\mathtt{trap1}$ with $\mathcal{H}(\mathtt{trap2}, \mathbbm{r}_j^{w_1})$ (please refer to Line $13$ in Algorithm \ref{keyword-search-algo}). If successful, it means that the client is authorized; otherwise, it aborts the connection. The server then gets the associated ciphertext $e^{w_i}$ with the location $\mathcal{D}_{\mathbb{W}}[\mathtt{trap1}]$ and sends it back to the client. The client then can get the plaintext bit index, $S^{w_i}$ and get the actual file identifiers associated with the keyword $w_i$ (please refer to Line $19$ in Algorithm \ref{keyword-search-algo}). Finally, the client can send the desired plaintext file identifiers to the server, and the server in return sends the actual encrypted files.  
\begin{algorithm}[H]
\caption{Database Update}
        \label{update}
  \algsize  
\hspace*{\algorithmicindent} \textbf{Input} Document identifier $f$, encrypted database $\mathbb{EDB}$, keyword $w_i$, state $st^j_{w_i}$, public parameter $\mathtt{PP}$, master secret $\mathtt{MS}$\\
 \hspace*{\algorithmicindent} \textbf{Output} Updated encrypted database $\mathbb{EDB}'$
\begin{algorithmic}[1]
    \State Data owner computes a bit string $\mathtt{S}^{w_i}_{up}$ to reflect the update (i.e., either addition or deletion of a document $f$)\;
    \State Data owner updates the state of the keyword $w_i$, $st^{j+1}_{w_i}= st^j_{w_i}+ 1$\;
    \State Data owner computes $add^j_{w_i}= \mathcal{H}(\mathtt{K}^{1w_i}_j, \mathbbm{r}_j^{w_i})$\;
    \State Data owner computes a ciphertext $e^{w_i}_{up}\leftarrow \mathtt{Enc}_{\mathtt{K}^{3w_i}_j}(\mathtt{S}^{w_i}_{up}, st^{j+1}_{w_i})$ using the additive symmetric homomorphic encryption method as described in Section \ref{Homomorphic-Encryption}\;%
     \State Data owner sends $add^j_{w_i}$, $e^{w_i}_{up}$ to the server\;
     \State The server gets the ciphertexts $e^{w_i}$ associated with $\mathbb{D}_{\mathbb{W}}[add^j_{w_i}]$\;
     \State The server performs a homomorphic addition operation between $e^{w_i}$ and $e^{w_i}_{up}$, i.e., $e^{w_i}\leftarrow\mathtt{Enc}_{\mathtt{K}^{3w}_j}(\mathtt{S}^{w}, st^j_{w_i})\oplus\mathtt{Enc}_{\mathtt{K}^{3w}_j}(\mathtt{S}^{w}_{up}, st^{j+1}_{w_i})$\;
     \State The server updates $\mathbb{EDB}$ by replacing the old $e^{w_i}$ with the final output of the homomorphic addition operation, i.e., 
     \State \Return{$\mathbb{EDB}'$}
 	\end{algorithmic}
 \end{algorithm}
 \subsection{Database Update}
\label{database-update}
In this phase, the data owner updates the database when one or more files are added or deleted from a keyword. The processing steps of this phase are presented in Algorithm \ref{update}. Suppose, the data owner wants to update the keyword $w_i$ to perform either addition or deletion of one or more files. The data owner first computes a bit string $\mathtt{S}^{w_i}_{up}$ to reflect the changes (please refer to Section \ref{bit-string-representation}) and encrypts it with the latest keyword key $\mathtt{K}^{3w_i}_j$ using the homomorphic encryption (please refer to Section \ref{Homomorphic-Encryption}). Finally, the data owner sends the ciphertext and the location of the keyword $add^j_{w_i}$ to the server. The server then performs homomorphic addition operation with the existing ciphertext $e^{w_i}$ and replaces it with the output. 
 \begin{algorithm}[H]
\caption{Client Revocation}
        \label{client-revocation}
  \algsize  
\hspace*{\algorithmicindent} \textbf{Input} Revoked keyword $w_i$, (non revoked) authorized client set $\mathcal{U}_i'$, Group key distribution tree $\mathcal{T}$, master secret $\mathtt{MS}$, public parameter $\mathtt{PP}$, encrypted database $\mathbb{EDB}$\\
 \hspace*{\algorithmicindent} \textbf{Output} Fresh keyword public-token $\{\mathtt{PubToken}^{w_i}_d\}_{d \in \mathtt{RootsSubTrees}(\mathcal{U}'_i)}$, updated $\mathbb{EDB}'$
\begin{algorithmic}[1]
    \State Data owner finds the attribute $att$ corresponding to the keyword $w_i$
    \State Data owner changes the state of the keyword $w_i$, $st^{j+1}_{w_i}= st^j_{w_i}+ 1$
    \State Data owner computes fresh keyword keys 
    $\mathtt{K}^{1w_i}_{j+1}\leftarrow H_1(\mathcal{F}(\mathtt{k}_{att}, w_i || 0 || st^{j+1}_{w_i}))$, $\mathtt{K}^{2w_i}_{j+1}\leftarrow H_1(\mathcal{F}(\mathtt{k}_{att}, w_i|| 1 || st^{j+1}_{w_i})$
    \State Data owner computes $\mathbbm{r}_{j+1}^{w_i} \xleftarrow{} \mathrm{CH.Forge(\mathtt{K}^{1w_i}_j, \mathtt{K}^{1w_i}_{j+1}, \mathbbm{r}_j^{w_i})}$ and $r_{j+1}^{w_i} \xleftarrow{} \mathrm{CH.Forge}(\mathtt{K}^{1w_i}_j, \mathtt{K}^{2w_i}_{j+1}, \mathbbm{r}_j^{w_i})$ 
    \State Data owner selects fresh root nodes, $\mathtt{RootsSubTrees}(\mathcal{U}'_i)$ of the minimum cover sets for the members in $\mathcal{U}'_i$. 
    \State Data owner computes fresh keyword public-tokens $\mathtt{PubToken}^{w_i}_d$, where $d \in \mathtt{RootsSubTrees}(\mathcal{U}'_i)$ using the same process defined in Section \ref{client-authorization} and deletes the old ones
    \State Finally data owner sends the fresh random number set $r^{w_i}_{j+1}$ along with $add^j_{w_i}= \mathcal{H}(\mathtt{K}^{1w_i}_j, \mathbbm{r}^{w_i}_j)$ to the server
    \State The server updates $\mathbb{EDB}$ to $\mathbb{EDB}'$ by replacing the previous random number $r^{w_i}_{j}$ with the fresh set $r^{w_i}_{j+1}$ associated with $\mathcal{D}_{\mathbb{W}}[add^j_{w_i}]$
     \State \Return{$\{\mathtt{PubToken}^{w_i}_d\}_{d \in \mathtt{RootsSubTrees}(\mathcal{U}'_i)}$, $\mathbb{EDB}'$}
 	\end{algorithmic}
 \end{algorithm}
  
 \subsection{Client Revocation}
 \label{revocation}
 In this phase, the data owner revokes one or more clients. Suppose, the data owner wants to revoke one or more authorized clients from accessing a keyword, say $w_i$. Suppose, $\mathcal{U}'_i$ is the set of non-revoked authorized clients. The data owner first changes the state, $st^{j}_{w_i}$ of the keyword $w_i$ and computes a fresh set of keyword-keys $(\mathtt{K}^{1w_i}_{j+1}, \mathtt{K}^{2w_i}_{j+1})$. The data owner also computes two random numbers $(\mathbbm{r}^{w_i}_{j+1}, r^i_{j+1})$ using the properties of the chameleon-hash function $\mathcal{H}$, which will be used to verify the revocation status of the clients as described in Section \ref{keyword-search1} by the server. Afterward, the data owner finds a fresh set of root nodes $\mathtt{RootsSubTrees}(\mathcal{U}'_i)$ for the non-revoked clients and computes fresh public-tokens to share the updated keyword keys $(\mathtt{K}^{1w_i}_{j+1}, \mathtt{K}^{2w_i}_{j+1})$ and the random number $\mathbbm{r}^{w_i}_{j+1}$ with the non-revoked authorized clients. The data owner sends $r^i_{j+1}$ to the server, which replaces the old random number $r^i_{j}$ associated with $\mathcal{D}_{\mathbb{W}}[add^j_{w_i}]$. Note that, the data owner does not change the keyword key $\mathtt{K}^{3w_i}_j$ in a revocation process, as we assume that the revoke clients have already accessed the encrypted documents. 
 \section{Discussion}
 \subsection{Security}
 The proposed scheme supports Forward and Backward privacy, the security proofs are given in Appendix. 
 \begin{table}[]
	\centering
	\caption{Security and Functionally Comparison}
 \algsize{
\begin{tabular}{|l|c|c|c|c|}
\hline
\multirow{2}{*}{Schemes} & \multicolumn{1}{l|}{\multirow{2}{*}{\begin{tabular}[c]{@{}l@{}}Forward \\ Privacy\end{tabular}}} & \multicolumn{1}{l|}{\multirow{2}{*}{\begin{tabular}[c]{@{}l@{}}Backward\\ Privacy\end{tabular}}} & \multicolumn{1}{l|}{\multirow{2}{*}{Multi Client}}& \multicolumn{1}{l|}{\multirow{2}{*}{Revocation}} \\
                         & \multicolumn{1}{l|}{}                                                                            & \multicolumn{1}{l|}{}                                                                            & \multicolumn{1}{l|}{}                             & \\ \hline
SMSE   \cite{wang2017towards}                  & -                                                                                                &   -                                                                                               & Y                                            &   -   \\ \hline
MCSSE     \cite{kermanshahi2019multi}   & -    &     -    & Y    & Y \\ \hline
NI-MCSSE    \cite{sun2020non} & -   &    -   & Y   & - \\ \hline
FP-MCSSE    \cite{bakas2019multi} & Y  &  -  & Y & -\\  \hline
MFS         \cite{wang2018multi}  & Y  & -   & Y    & Y \\ \hline
MCFPSSE     \cite{gan2021towards} & Y & -  & Y  & - \\ \hline
      Our scheme                           & Y                                                                                             &   Y                                                                                          & Y               & Y    \\ \hline

\end{tabular}}
\label{table-func}
\end{table}
\begin{table*}[!t]
	\centering
	\caption{Computation complexity comparison}
  \algsize{

\begin{tabular}{|l|l|l|l|}
\hline
Schemes        & Update &   Search & Revocation   \\ \hline
SMSE        \cite{wang2017towards}             & \multicolumn{1}{c|}{-}                                   & $O\left(\left|R_{w}\right|\right) \cdot\left(2 t_{B P}+t_{H}\right)$   & \multicolumn{1}{c|}{-} \\ \hline
MCSSE       \cite{kermanshahi2019multi}             & \multicolumn{1}{c|}{-}                                   & $O\left(\left|R_{w}\right|\right) \cdot\left(t_{H}\right)$              & \multicolumn{1}{c|}{$t_{ks}$} \\ \hline
NI-MCSSE      \cite{sun2020non}           & \multicolumn{1}{c|}{-}                                   & $O\left(\left|R_{w}\right|\right) \cdot\left(t_{H}\right)$         &  \multicolumn{1}{c|}{-} \\ \hline
FP-MCSSE     \cite{bakas2019multi}            & \multicolumn{1}{l|}{$t_{G}+t_{E n c}+t_{P E n c}+t_{H}$} & $O\left(\left|U_{w}\right|\right) \cdot\left(t_{H}\right)$       &   \multicolumn{1}{c|}{-} \\ \hline
MFS          \cite{wang2018multi}            & \multicolumn{1}{l|}{$t_{B P}+t_{E}+3 t_{H}$}             & $O\left(\left|U_{w}\right|\right) \cdot\left(t_{B P}+t_{H}\right)$   & \multicolumn{1}{c|}{$**$} \\ \hline
MCFPSSE    \cite{gan2021towards}              & \multicolumn{1}{l|}{$2 t_{H}$}                           & $O\left(\left|U_{w}\right|\right) \cdot\left(t_{H}\right)$        &  \\ \hline
Our Scheme  & $2t_{H_1} + 2|\mathbb{U}| + t_{CH} $ &  $3t_{H_1} + 2t_{CH} $  &  $(4 + |\mathtt{RST}_{\mathcal{U}'_i}|)t_{H_1} $ \\ 
& $+ (|\mathbb{D}|/\lambda + 1) \cdot t_\mathcal{F}$ & $+ (|\mathbb{D}|/\lambda + 1) \cdot t_\mathcal{F}$ & $+ 4|\mathbb{U}| + t_{CH}+ 2t_{CHF}$ \\ \hline
\end{tabular}}
\label{tab cost-comp}
\begin{tablenotes}
    \item $**$: the revocation is done via a request to proxy/server to remove  the query key of the user, hence no computation is involved. $t_{ks}$: time for the generation of new key shares for the valid users. $t_{H}$, for standard hash function, $t_{G}$ for the invertible pseudorandom function, $t_{E n c}$ for the symmetric encryption, $t_{P E n c}$  for public-key encryption, $t_{B P}$ for the bilinear pairing, $t_{E}$ the exponentiation time, $t_{RST_i}$ for finding the $\mathtt{RootsSubTrees}(\mathcal{U}_i)$ which is $2|\mathbb{U}|$ in the worst case, $t_{KG}$ for key generation, $t_{H_1}$ for hash function $H_1$, $t_\mathcal{F}$ for PRF $\mathcal{F}$, $t_{CH}$ for computing Chameleon Hash, $t_{CHF}$ for finding a collision of Chameleon Hash CH.Forge() given the trapdoor, $t_{HEnc}$/$t_{HDec}$ for symmetric additive homomorphic encryption/decryption operation which is $(|\mathbb{D}|/\lambda + 1) \cdot t_\mathcal{F}$, $t_{AH}$ for symmetric additive homomorphic addition operation which is only XOR operation than can be treated as negligible. $\left|U_{w}\right|$ the number of update operations about keyword $w .|R|$ the size of the result. $|\mathbb{DB}|$ number of keyword-document pairs; $|\mathbb{W}|$  number of keywords in $\mathbb{DB}$; $|\mathbb{D}|$ denotes the number of documents, $|\mathbb{U}|$ number of users, |$\mathbb{A}$| number of attributes, $|\mathbb{S}|$ number of servers; $|n|$ and $|s|$ the length of No.Files $[w]$ and No.Search $[w]$ in FPMC-SSE \cite{bakas2019multi}, respectively. $N$ the number of registered clients. 
\end{tablenotes}
\end{table*}
 \begin{table*}[!t]
	\centering
	\caption{Storage and Communication overhead comparison}
 \algsize{
\begin{tabular}{|l|l|l|l|l|l|}
\hline
Schemes & Client & Server & Third Party  &  Comm (Update) & Comm (Search)        \\ \hline
SMSE        \cite{wang2017towards}             & \multicolumn{1}{l|}{$O(1)$}                     & \multicolumn{1}{l|}{$O(|\mathrm{DB}|)$} & \multicolumn{1}{c|}{-}  & \multicolumn{1}{c|}{-}      & $O(|R|)$\\ \hline
MCSSE       \cite{kermanshahi2019multi}              & $O(|R|)$                    & \multicolumn{1}{l|}{$O(1)$}                     & \multicolumn{1}{l|}{$O(|\mathrm{DB}|)$}  & \multicolumn{1}{c|}{-}      & $O(|R|)$  \\ \hline
NI-MCSSE      \cite{sun2020non}          & \multicolumn{1}{l|}{$O(1)$}                     & \multicolumn{1}{l|}{$O(|\mathrm{DB}|)$} & \multicolumn{1}{c|}{-}  & \multicolumn{1}{c|}{-}      & $O(|R|)$  \\ \hline
FP-MCSSE     \cite{bakas2019multi}            & \multicolumn{1}{l|}{$O(|W|) \cdot(|n|+|s|)$}    & \multicolumn{1}{l|}{$O(|\mathrm{DB}|)$} & $O(|W|)$                & \multicolumn{1}{l|}{$O(1)$} & $O(|R|)$ \\ \hline
MFS          \cite{wang2018multi}          & \multicolumn{1}{l|}{$O(1)$}                     & \multicolumn{1}{l|}{$O(|D B|+N)$}       & $O(|W|+N)$              & \multicolumn{1}{l|}{$O(1)$} & $O(|R|)$ \\ \hline
MCFPSSE    \cite{gan2021towards}              & \multicolumn{1}{l|}{$O(|W|) \cdot(2|\lambda|)$} & \multicolumn{1}{l|}{$O(|D B|+N)$}       & $O(1)$                & \multicolumn{1}{l|}{$O(1)$} & $O(|R|)$    \\ \hline
Our Scheme  &  $O(|\mathbb{S}| + |\mathbb{D}| \cdot |\mathbb{W}| + |\mathbb{U}| + |\mathbb{U}|\cdot \mathtt{log}|\mathbb{U}|)$  &  $O(|\mathbb{D}| \cdot |\mathbb{W}|)$&    \multicolumn{1}{c|}{-}     & $O(|\mathbb{D}|)$ & $O(|\mathbb{D}|)$    \\ \hline

\end{tabular}}
\label{tab cost-storage}

\end{table*}



 \subsection{Performance Analysis}
 \label{performance-analysis}
This section analyses the performance of our scheme. We start this section by providing a theoretical performance analysis in Section \ref{theory-performance} and then the implementation and experimental results in Section \ref{experiments}. 
\subsubsection{Theoretical Performance Analysis}
\label{theory-performance}
Table \ref{table-func}, Table \ref{tab cost-comp}, and Table \ref{tab cost-storage} illustrate the comparison among state-of-the-art works from a functionality, security, and performance perspective. A number of SSE schemes (e.g. \cite{liu2018multi,kermanshahi2019multi,sun2020non,wang2017towards}) have been proposed for multi-client use. An efficient SSE with conjunctive keyword search and fast decryption in multi-client settings was proposed by Wang et al. \cite{wang2017towards} thanks to the server-side match technique. Using the distributed key-homomorphic pseudorandom function (PRF), Kasra-Kermanshahi et al. \cite{kermanshahi2019multi} constructed a novel multi-client SSE. NIMC-SSE, designed by Sun et al. \cite{sun2020non}, does not require data owners to interact with clients in order to be efficient.  Their scheme leverages attribute-based encryption to control access to cloud data at a finer level. Unfortunately, these schemes do not address dynamic databases. As a result of integrating a semi-trusted proxy server, Wang et al. \cite{wang2018multi} have proposed a multi-client forward private SSE scheme with optimal search complexity. Nevertheless, this scheme requires bilinear pairings, resulting in heavy computation overhead.   
  A multi-client SSE scheme combining the use of an invertible PRF and trusted authority was proposed by Bakas and Michalas \cite{bakas2019multi}. However, there is no performance evaluation for a search and update protocol. 
\par
The following is the performance analysis of our approach:
\begin{itemize}
    \item Computation cost (Table \ref{tab cost-comp})
    \begin{itemize}
        \item For $\mathbb{EDB}$ generation, the computation cost at the data owner is $|\mathbb{W}| \cdot (3t_\mathcal{F} + 2t_{H_1} + t_{CH} + t_{CHF} + t_{HEnc}) + \sum_{i=1}^{|\mathbb{A}|}{t_{RST_i}} = |\mathbb{W}| \cdot ((4 + |\mathbb{D}|/\lambda)t_\mathcal{F} + 2t_{H_1} + t_{CH} + t_{CHF}) + 2|\mathbb{A}||\mathbb{U}|$
        \item For search operation, the computation complexity at a server is $t_{CH}$, at each user is $3t_{H_1} + t_{CH} + t_{HDec}$. Then, the total search computation complexity is $3t_{H_1} + 2t_{CH} + (|\mathbb{D}|/\lambda + 1) \cdot t_\mathcal{F}$
        \item For update operation, the complexity at the data owner is $2t_{H_1} + t_{RST_i} + t_{CH}+ t_{HEnc}$, at a server is $t_{AH}$. Thus, the total computation complexity is $2t_{H_1} + 2|\mathbb{U}| + t_{CH} + (|\mathbb{D}|/\lambda + 1) \cdot t_\mathcal{F}$
        \item For the user revocation operation, the computation complexity at the data owner is $(4 + |\mathtt{RootsSubTrees}_{\mathcal{U}'_i}|)t_{H_1} + 2t_{RST_i} + t_{CH}+ 2t_{CHF} = (4 + |\mathtt{RST}_{\mathcal{U}'_i}|)t_{H_1} + 4|\mathbb{U}| + t_{CH}+ 2t_{CHF}$; where $|\mathtt{RST}_{\mathcal{U}'_i}|$ is the number of nodes in $|\mathtt{RootsSubTrees}_{\mathcal{U}'_i}|$
    \end{itemize}
    \item Storage overhead (Table \ref{tab cost-storage})
    \begin{itemize}
        \item The storage overhead at the data owner is $O(|\mathbb{S}| + |\mathbb{D}| \cdot |\mathbb{W}| + |\mathbb{U}| + |\mathbb{U}|\cdot \mathtt{log}|\mathbb{U}|)$; including $\lambda \cdot |\mathbb{S}|$ bits for DHT, $\lambda \cdot |\mathbb{U}|$ bits for $\{l_{id}\}_{id \in \mathbb{U}}$, $\lambda \cdot |\mathbb{W}|$ bits for $\{l_{w}\}_{w \in \mathbb{W}}$, $\lambda \cdot |\mathtt{U}| \cdot (\ceil{\mathtt{log} |\mathtt{U}|} + 1)$ for $\mathtt{PathKeyToken}$, $\lambda_q \cdot (1 + |\mathbb{W}|)$ bits for ($MK, \{r^w\}_{w \in \mathbb{W}}$), $\lambda_p$ bits for $pk$, $\sum_{i=1}^{|\mathtt{W}|}{\lambda \cdot |\mathtt{RootsSubTrees}_{\mathcal{U}_i}|}$ for $\mathtt{PubToken}$, $(\lambda_q + |\mathbb{D}|) \cdot |\mathbb{W}|$ for $\mathbb{EDB}$, where $\lambda$ is the security bits of $\mathcal{F}$, $\lambda_q, \lambda_p$ are the security bits of large primes $p, q$ in Chameleon Hash
        \item The storage overhead at a server is $O(|\mathbb{D}| \cdot |\mathbb{W}|)$, which is $(\lambda_q + |\mathbb{D}|) \cdot |\mathbb{W}|$ bits of $\mathbb{EDB}$
        \item The storage overhead at each user is $O(1)$, which is $\lambda$ bits of $\mathtt{k}_{id}$
    \end{itemize}
    \item Communication cost (Table \ref{tab cost-storage})
    \begin{itemize}
        \item For $\mathbb{EDB}$ generation, the communication overhead is $O(|\mathbb{D}| \cdot |\mathbb{W}|)$, which is $(\lambda_q + |\mathbb{D}|) \cdot |\mathbb{W}|$ bits of $\mathbb{EDB}$ sent from the data owner to a server
        \item For search operation, communication overhead is $O(|\mathbb{D}|)$, which is ($\lambda_p + \lambda_q$) bits of $\mathtt{trapdoor}$ sent from a user to a server and $|\mathbb{D}|$ bits of the encryption search result sent back from the server to the client
        \item For update operation, communication complexity is $O(|\mathbb{D}|)$, which is $(\lambda_p + |\mathbb{D}|)$ bits of the address $add^j_w$ and the encryption $e^w_{up}$ of the updated inverted index sent from the data owner to a server
    \end{itemize}
\end{itemize}





\begin{figure}[t]
     \centering
     \subfloat[]{
         \includegraphics[width=2 in]{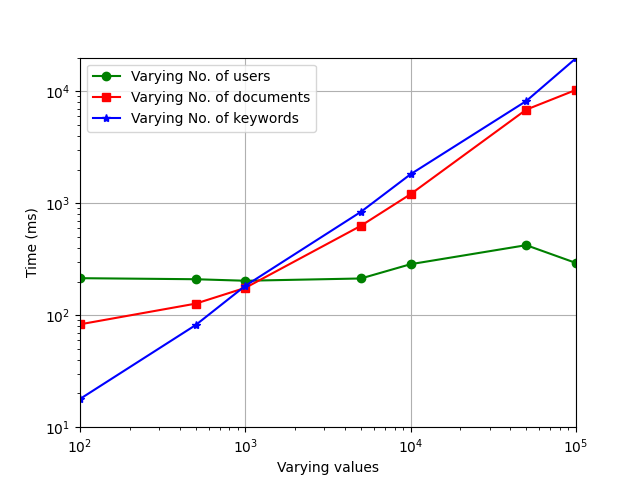}
          \label{time_dbgen}
     }
     \subfloat[]{
         \includegraphics[width=2 in]{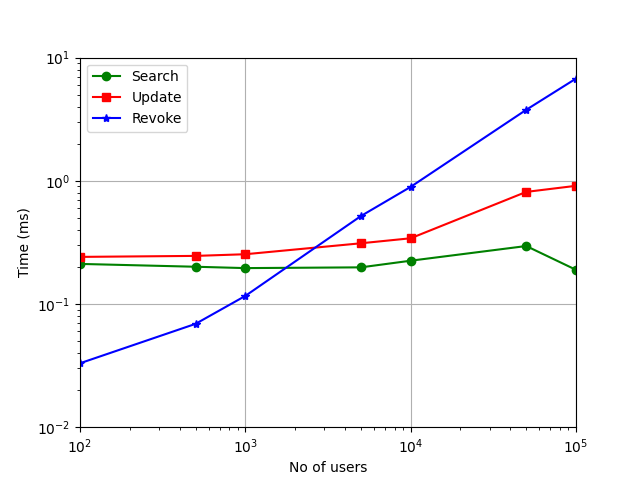}
          \label{varying_users}
     }
     \caption{Performance of our protocol: (a) Time of $\mathbb{EDB}$ generation w.r.t varying numbers of users/documents/keywords in milliseconds (b) Time of search, update, revoke w.r.t varying numbers of users in milliseconds at the setting $|\mathbb{W}| = 1000, |\mathbb{D}| = 1000$}
\end{figure}

\begin{figure}[t]
     \centering
   \subfloat[]{
         \includegraphics[width=2 in]{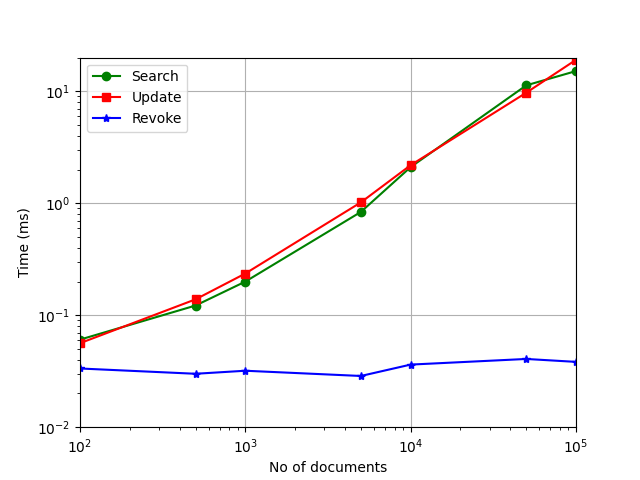}
          \label{varying_docs}
     }
    \subfloat[]{
     \includegraphics[width=2 in]{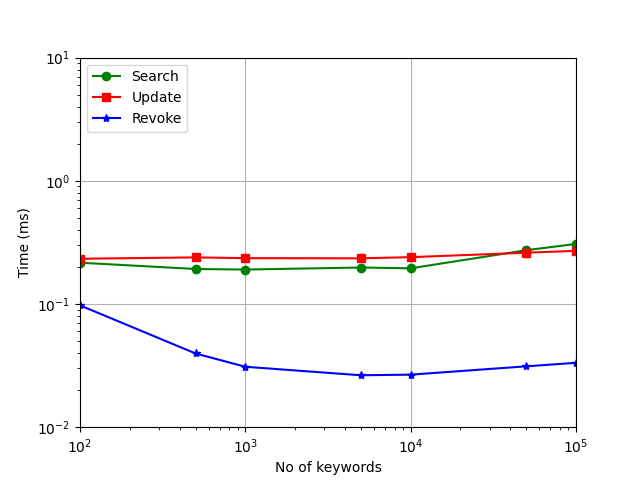}
      \label{varying_kws}
 }
     \caption{Performance of our protocol:  (a) Time of search, update, revoke w.r.t varying numbers of documents in milliseconds at the setting $|\mathbb{W}| = 1000, |\mathbb{U}| = 100$ (b) Time of search, update, revoke w.r.t varying numbers of keywords in milliseconds at the setting $|\mathbb{D}| = 1000, |\mathbb{U}| = 100$}
\end{figure}

\subsubsection{Implementation and Experimental Results}
\label{experiments}
This section presents the implementation of our scheme and results corresponding to different parameter settings.
\par 
The system is testing and deploying in Ubuntu Docker. The program is in C++ language, utilizing Pairing-Based Cryptography Library
(PBC) and openssl library. We test the performance of the system based on synthesis datasets of varying sizes, with different parameter settings of ($|\mathbb{D}|$, |$\mathbb{W}$|, |$\mathbb{U}$|). The size of a dataset, denoted as $|\mathbb{DB}|$ is the number of (keyword, id) pairs. For each set of parameters, we ran each operation (search, update, and revocation) 1000 times with random keywords and obtained the averaged performance in milliseconds. The setup deals with initial key generation and plain dataset generation. Keywords are classified in $\mathtt{log}|\mathbb{D}|$ types, in which type $i$ can access the set of files  $(0, 1, \cdots, |\mathbb{D}|/2^i)$. $\lambda = 128, \lambda_q = 160, \lambda_p = 1024$. The following is a more detailed interpretation of the implementation results.
\subsubsection{$\mathbb{EDB}$ generation}
As can be seen from Figure \ref{time_dbgen}, the time of $\mathbb{EDB}$ generation increases proportionally to the increase of the number of keywords or documents. Increasing the number of users does not change the time of encrypted dataset generation too much. For a medium dataset with $221 779$ {kewyord-document} pairs ($10^3$ keywords, $10^3$ documents), it just takes around 200 milliseconds to generate $|\mathbb{EDB}|$. For a bigger dataset, for example, with $22 199 779$ {keyword-document} pairs ($10^5$ keywords, $10^3$ documents), the time for generating the encrypted dataset is only 17672 milliseconds (<18 seconds). The implementation results demonstrate that the generation of $\mathbb{EDB}$ is efficient.
\subsubsection{Search}
Searching in the encrypted database of our approach is also efficient. Increasing the number of users or the number of keywords does not affect much on the performance of responding a random query. For example, the time answer remains around 0.2-0.3 milliseconds if increasing the number of users from $10^2$ to $10^5$, given $|\mathbb{W}| = 10^3, |\mathbb{D}| = 10^3$ (Figure \ref{varying_users}), or increasing the number of keywords from $10^2$ to $10^5$, given $|\mathbb{D}| = 10^3, |\mathbb{U}| = 10^2$ (Figure \ref{varying_kws}). However, when the number of documents rises, the time for search significantly increases (Figure \ref{varying_docs}) to around 15 milliseconds for the setting of $10^5$ documents, $10^3$ keywords, and 100 users.
\subsubsection{Update}
We examined the run-time performance of our proposed update strategy with our synthesis datasets and found that the update is also efficient too. Update time remains around 0.2-0.3 milliseconds when changing the number of keywords for the setting of 100 users, $10^3$ documents as can be seen from Figure \ref{varying_kws}. However, it is not the case when the number of users or documents is increased. For example, when the number of users increases to $10^5$, given $10^3$ keywords and $10^3$ documents, update time rises to nearly 1 millisecond (Figure \ref{varying_users}). Especially, the update time dramatically rises to around 18 milliseconds when increasing the number of documents to $10^5$, given $10^3$ keywords and 100 users. 
\subsubsection{Revocation}
Our approach offers an efficient solution for user revocation. There is no need to regenerate fresh user keys. The revocation time remains less than 0.1 milliseconds no matter how the number of words or documents changes, given the number of users is fixed at 100 users (Figure \ref{varying_kws}, Figure \ref{varying_docs}). When the number of users increases the revocation time also rises up due to the increasing time to find $\mathtt{RootsSubTrees}_{\mathcal{U}_i}$ and compute fresh public-tokens. For example, revocation time is around 7 milliseconds when having $10^5$ users.
\par
As can be seen from the implementation results, our systems can support large datasets, a big number of users, and still can provide efficient performance (in milliseconds) for search, update, and user revocation operations.

\section{Conclusion}
\label{conclusion}
We proposed a novel multi-client SSE scheme for the IoT environment that enables a data owner to delegate keyword search capabilities to multiple clients in an efficient and secure manner. Using DHT, we enable the data owner to store encrypted data efficiently across multiple servers or service providers. In addition, our scheme supports dynamic encrypted databases that help to add/delete files with minimal leakage ensuring both forward and backward privacy. Further, our scheme presented a client search privilege revocation method with minimal overhead in the system. We provided comprehensive security and performance analysis. The implementation results show that our scheme provides better functionality and outperforms the other notable works in terms of storage, computation, and communication overheads. 
 \bibliographystyle{unsrt}

 \appendix
 \section{Security Analysis}
 \label{security-analysis}
 \subsection{Security Definition}\label{subsec:security}
The security definition of DSSE is derived from two games: $\verb"DSSEREAL"_{\mathcal{A}}^{\Gamma}(1^{\lambda})$ and 
$\verb"DSSEIDEAL"_{\mathcal{A},\mathcal{S}}^{\Gamma}(1^{\lambda})$.
The game 
$\verb"DSSEREAL"_{\mathcal{A}}^{\Gamma}(1^{\lambda})$ 
is executed using DSSE.
The game  $\verb"DSSEIDEAL"_{\mathcal{A},\mathcal{S}}^{\Gamma}(1^{\lambda})$ 
is simulated using the leakage of DSSE. 
  The leakage can be described by a function $\mathcal{L}=(\mathcal{L}^{Stp}, \mathcal{L}^{Srch}, \mathcal{L}^{Updt})$, which describes what information is leaked to the adversary $\mathcal{A}$. Using the leakage function $\mathcal{L}$ as a measure of information leakage, if adversary $\mathcal{A}$ cannot discern these two games, then the information that can be inferred from it is the only information leaked. 
 In more formal terms,
\begin{itemize}
    \item $\verb"DSSEREAL"_{\mathcal{A}}^{\Gamma}(1^{\lambda})$: 
  	  On input a database $\mathbb{DB}$, 
 	   which is chosen by the adversary $\mathcal{A}$, 
  	  it outputs $\mathbb{EDB}$ by using $\mathrm{Setup}(1^{\lambda},\mathbb{DB})$ to the adversary $\mathcal{A}$. 
	$\mathcal{A}$ can repeatedly perform a search query $q$ (or an update query $(op,in$)).
	The game  outputs the results generated by running $\mathrm{Search}(q)$ 
	(or $\mathrm{Update}(op,in))$ to the adversary $\mathcal{A}$. 
	Eventually, $\mathcal{A}$ outputs a bit.
\item $\verb"DSSEIDEAL"_{\mathcal{A},\mathcal{S}}^{\Gamma}(1^{\lambda})$: 
On input a database which is chosen by the adversary $\mathcal{A}$, 
it outputs $\mathbb{EDB}$ to the adversary $\mathcal{A}$ by using 
a simulator $\mathcal{S}(\mathcal{L}^{Stp}(1^{\lambda}$, DB)). 
Then, it simulates the results for the search query $q$ by using the leakage function $\mathcal{S}(\mathcal{L}^{Srch}(q))$ 
and uses $\mathcal{S}(\mathcal{L}^{Updt}(op,in))$ to simulate the results for update query ($op,in$). Eventually, $\mathcal{A}$ outputs a bit.
\end{itemize}

\begin{definition}
A DSSE scheme $\Gamma$ is $\mathcal{L}$-adaptively-secure 
if for every PPT adversary $\mathcal{A}$, 
there exists an efficient simulator $\mathcal{S}$ such that $|Pr[\verb"DSSEREAL"_{\mathcal{A}}^{\Gamma}(1^{\lambda})=1]-Pr[\verb"DSSEIDEAL"_{\mathcal{A},\mathcal{S}}^{\Gamma}(1^{\lambda})=1]|\le negl(1^{\lambda})$.
\end{definition}
 \subsection{Forward Privacy}
The adaptive security of our construction relies on the semantic security of ASHE. All file indices are encrypted using ASHE. 
Without the secret key, the server cannot learn anything from the ciphertext. In our construction, for the update, we only leak the number of updates corresponding to the queried keywords \textbf{w}. Since all cryptographic operations are performed on the client side where no keys are revealed to the server, the server can learn nothing from the update, given that ASHE is IND-CPA secure. We can simulate the $\verb"DSSEREAL"$ as in Algorithm \ref{Sim2} and simulate the $\verb"DSSEIDEAL"$ by encrypting all 0's strings for $\mathbb{EDB}$. The adversary $\mathcal{A}$ can not distinguish the real ciphertext from the ciphertext of 0's. Then, $\mathcal{A}$ cannot distinguish $\verb"DSSEREAL"$ from $\verb"DSSEIDEAL"$. Hence, our Construction achieves forward security.
\begin{theorem}\label{th:fb}
(Adaptive forward privacy).  
Let $\mathcal{L}_{\Gamma}=(\mathcal{L}_{\Gamma}^{Srch}$, $\mathcal{L}_{\Gamma}^{Updt})$, 
where $\mathcal{L}_{\Gamma}^{Srch}(\textbf{w})=(sp(\textbf{w}))$, $\mathcal{L}_{\Gamma}^{Updt}(op,w,ind)=(Time(w))$,
 \textbf{w} is a set of queried keywords and $w\in\textbf{w}$, then our construction is $\mathcal{L}_{\Gamma}$-adaptively forward-private.
\end{theorem}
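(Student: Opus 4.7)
The plan is to prove Theorem \ref{th:fb} via a standard simulation-based argument, exhibiting a PPT simulator $\mathcal{S}$ that produces a view computationally indistinguishable from \verb"DSSEREAL" using only $\mathcal{L}_\Gamma^{Srch}$ and $\mathcal{L}_\Gamma^{Updt}$, and then walking from the real to the ideal game through a short sequence of hybrids whose pairwise gaps reduce to the assumed security of the underlying primitives (the PRF $\mathcal{F}$, the hash $H_1$ modeled as a random oracle, the chameleon hash $\mathcal{H}$, and the additive symmetric homomorphic encryption ASHE from Section \ref{Homomorphic-Encryption}).

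First I would describe the simulator. On $\mathcal{L}_\Gamma^{Stp}$, $\mathcal{S}$ picks uniform random strings in place of every node key, client key, attribute key, $\mathtt{k}^{dht}_{att_i}$, and the three keyword keys $\mathtt{K}^{1w}_j,\mathtt{K}^{2w}_j,\mathtt{K}^{3w}_j$; in place of each $e^w$ it writes a ciphertext of the all-zeros bit string under a fresh ASHE key. On an update query, using $Time(w)$ (which only reveals an anonymous ``keyword slot'' and timestamp when the query does correspond to a previously searched keyword, and only a counter otherwise), $\mathcal{S}$ samples a fresh random address and a fresh ASHE ciphertext of an all-zeros update string; forward privacy is captured by the fact that, before the associated search, no entry in $Time(w)$ links this update to any prior search, so $\mathcal{S}$ simply programs a new random location. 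On a search for $w$, $\mathcal{S}$ learns $sp(\mathbf{w})$ (allowing it to decide whether $w$ has been queried before) and $Time(w)$, so it can retroactively program the random oracle $H_1$ and the chameleon-hash trapdoor to bind the previously assigned random addresses of all updates on $w$ to the freshly produced $\mathtt{trap1},\mathtt{trap2}$, and it patches the ASHE ciphertexts so that their homomorphic XOR decrypts to a bit string consistent with $Time(w)$.

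The hybrid sequence I would use proceeds as follows. $G_0$ is \verb"DSSEREAL". In $G_1$, replace every PRF invocation of $\mathcal{F}$ with a lazily sampled truly random function; the distinguishing gap is bounded by the PRF advantage against $\mathcal{F}$. In $G_2$, replace $H_1$ with a random oracle and program its outputs lazily. In $G_3$, replace each chameleon-hash address $add^j_w$ produced for an update that has not yet been searched with a uniformly random element of the range of $\mathcal{H}$; this step uses the indistinguishability property of chameleon hashes (Definition \ref{Chameleon hash}), and consistency with future searches is restored using \textsc{CH.Forge} exactly as the construction itself does in the revocation phase. In $G_4$, replace each ASHE ciphertext written to $\mathcal{D}_\mathbb{W}$ with an encryption of the all-zeros bit string; indistinguishability follows from the IND-CPA security of ASHE, and is sound for forward privacy because the keys $\mathtt{K}^{3w}_j$ are never transmitted to the server before the corresponding search. $G_4$ is then syntactically identical to \verb"DSSEIDEAL" with the simulator described above.

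The main obstacle, and the step I would spend the most care on, is the chameleon-hash programming in $G_3$: I need the view of $\mathcal{A}$ to be indistinguishable both \emph{before} a keyword is searched (when $\mathcal{S}$ has committed to a random $add^j_w$ with no trapdoor information yet) and \emph{after} the search (when $\mathcal{S}$ must produce an $r^w_j$ consistent with the random $add^j_w$, the simulated $\mathtt{K}^{1w}_j$, and every prior update address on the same keyword). This is exactly what the trapdoor-collision property of $\mathcal{H}$ provides, but the reduction must carefully argue that the joint distribution of $(add^j_w, r^w_j, \mathtt{K}^{1w}_j, \mathtt{K}^{2w}_j)$ produced by the simulator is identical to the real one, conditioned on the adversary's view; this follows from the uniformity of $\mathbbm{r}^w_j$ and the chameleon property, and is the crux of capturing forward privacy since an adversary who could distinguish here would, in effect, be linking an update timestamp to a later search, violating $\mathcal{L}_\Gamma^{Updt}(op,w,ind)=(Time(w))$.
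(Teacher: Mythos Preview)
Your proposal is correct in spirit but takes a considerably more elaborate route than the paper. The paper's proof uses only two hybrids: $G_0$ is the real game, $G_1$ replaces every PRF call by a lazily sampled random value (gap bounded by $Adv^{prf}_{F,\mathcal{B}_1}$), and then the simulator of Algorithm~\ref{Sim2} replaces every ASHE ciphertext by an encryption of the all-zeros string, with the remaining gap bounded by the IND-CPA advantage against ASHE. Crucially, the paper's simulator treats the EDB address as a single opaque value $UT_w\leftarrow Key(w)$ indexed by the keyword slot revealed through $sp(w)$ and $Time(w)$; it never unpacks the chameleon hash or models $H_1$ as a random oracle, and it does not attempt to make simulated ciphertexts decrypt consistently (the server never decrypts, and $\mathcal{L}_\Gamma^{Srch}$ does not include the result set, so encrypting zeros suffices). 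Your additional hybrids $G_2$ and $G_3$ and the retroactive programming you flag as the ``main obstacle'' are therefore entirely sidestepped in the paper's argument. What your finer decomposition buys is fidelity to the actual server transcript, which does contain $r^w_j$ and the check $\mathcal{H}(\mathtt{trap2},r^w_j)=\mathtt{trap1}$; one could view your version as filling in detail the paper elides. One caution on your $G_3$: if the simulator first commits to a uniformly random $add^j_w$ and an independent random $r^w_j$, then later producing $\mathtt{trap2}$ with $\mathcal{H}(\mathtt{trap2},r^w_j)=add^j_w$ is a discrete-log instance for the concrete chameleon hash used here; the fix is to have the simulator sample $\mathtt{trap2}$ and $r^w_j$ per slot up front and set $add^j_w=\mathcal{H}(\mathtt{trap2},r^w_j)$, which removes the need for any retroactive programming and collapses your argument closer to the paper's.
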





\begin{proof}

\textbf{Game} $G_{0}$: $G_{0}$ is exactly same as the real world game $\verb"DSSEREAL"_{\mathcal{A}}^{\Gamma}(1^{\lambda})$.$$Pr[\verb"DSSEREAL"_{\mathcal{A}}^{\Gamma}(1^{\lambda})=1]=Pr[G_{0}=1]$$

\textbf{Game} $G_{1}$: Instead of calling $F$ when generating $UT_w$, $G_{1}$ picks a new random key when it inputs a new keyword $w$, and stores it in a table $Key$ so it can be reused next time. If an adversary $\mathcal{A}$ is able to distinguish between $G_{0}$ and $G_{1}$, we can then build a reduction able to distinguish between $F$ and a truly random function. More formally, there exists an efficient adversary $\mathcal{B}_1$ such that $$Pr[G_{0}=1]-Pr[G_{1}=1]\le Adv_{F,\mathcal{B}_1}^{prf}(1^{\lambda}).$$

\begin{algorithm}[!htb]
    \caption{\textbf{Simulator} $\mathcal{S}_2$}\label{alg:bS}
    \label{Sim2}
     \algsize 
    \begin{multicols}{2}
    \underline{$\mathcal{S}.$\textbf{Setup}($1^{\lambda})$}
    \begin{algorithmic}[1]
        \State $K\leftarrow\{0,1\}^{\lambda}$
        \State $(SK)\leftarrow KeyGen(1^{\lambda})$
        \State \textbf{T} $\leftarrow$ empty map
        \State $m=0$
        \State \Return $(SK, K, \textbf{T}, m)$
    \end{algorithmic}

    \underline{$\mathcal{S}.$\textbf{Update}($Time(w)$)}\\
	\textit{Client:}
	\begin{algorithmic}[1]
            \State Parse $Time(w)$ as $(w,c)$
            \State $UT_w\leftarrow Key(w)$
            \For{$i=0$ to c}
            \State $e_{w,i}\leftarrow Enc(0's)$
            \State Send $(UT_w,e_{w,i})$ to the server.
            \EndFor	
\end{algorithmic}

    \underline{$\mathcal{S}.$\textbf{Search}($sp(w)$)}\\
    \textit{Client:}
    \begin{algorithmic}[1]
    \State $w\leftarrow sp(w)$
        	\State $UT_w\leftarrow Key(w)$
        	\State Send $UT_w$ to the server.
        \algstore{break}
    \end{algorithmic}
	
	\textit{Server:}
	\begin{algorithmic}[1]
		\algrestore{break}
		\State Upon receiving $UT_w$
		\State $e_w\leftarrow \textbf{T}[UT_w]$
		\State Send $e_w$ to the Client.
	\end{algorithmic}

	\end{multicols}
\end{algorithm}
\textbf{Simulator} We replace the bit string $bs$ with an all 0's string, we removed the useless part which will not influence the client's transcript. See Algorithm \ref{alg:bS} for more details. If an adversary $\mathcal{A}$ is able to distinguish between $G_{1}$ and $\verb"DSSEIDEAL"_{\mathcal{A},\mathcal{S}_2}^{\Gamma}(1^{\lambda})$, then we can build an adversary $\mathcal{B}_2$ to break the IND-CPA secure of ASHE. More formally, there exists an efficiently adversary $\mathcal{B}_2$ such that $$Pr[G_{1}=1]-Pr[\verb"DSSEIDEAL"_{\mathcal{A},\mathcal{S}_2}^{\Gamma}(1^{\lambda})=1]\le Adv_{\Sigma,\mathcal{B}_2}^{IND-CPA}(1^{\lambda}).$$

Finally, $$Pr[\verb"DSSEREAL"_{\mathcal{A}}^{\Gamma}(1^{\lambda})=1]-Pr[\verb"DSSEIDEAL"_{\mathcal{A},\mathcal{S}_2}^{\Gamma}(1^{\lambda})=1]$$$$\le Adv_{F,\mathcal{B}_1}^{prf}(1^{\lambda})+Adv_{\Sigma,\mathcal{B}_2}^{IND-CPA}(1^{\lambda})$$ which completes the proof.\hfill{$\Box$}

\end{proof}

\subsection{Backward privacy}
Every time we perform an update (all updates behave the same), the old ciphertext is replaced by a new one. Each ciphertext contains all the file identifiers as they were presented in a bit string. The result is that a)one cannot tell whether a component of the plaintext is updated, b) search queries do not reveal matching entries after they've been deleted.
Therefore, the proposed scheme meets the requirements of the "backward privacy" discussed below.

\begin{theorem}\label{th:bf}
 Let $F$ be a pseudo-random function and $Enc$ be a secure additive homomorphic symmetric encryption (ASHE), then our construction is $\mathcal { L }$-adaptively secure with the same leakage functions.
 



\end{theorem}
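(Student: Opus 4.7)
\textbf{Proof proposal for Theorem \ref{th:bf}.} The plan is to follow the same game-hopping template used for Theorem \ref{th:fb}, but to exploit the homomorphic ``in-place'' update structure of the bitmap ciphertexts to argue backward privacy on top of forward privacy. Concretely, I would define $\mathcal{L}^{Srch}_{\Gamma}(w)=(\mathrm{sp}(w),\mathrm{TimeDB}(w))$ and $\mathcal{L}^{Updt}_{\Gamma}(op,w,ind)=(\mathrm{Updates}(w))$, matching the leakage used by other bitmap-based DSSE schemes. The critical observation is that every update on keyword $w$ at state $st^j_w$ causes the server to XOR the update ciphertext $e^w_{up}$ into the single stored ciphertext located at $\mathcal{D}_{\mathbb{W}}[add^j_w]$; the intermediate plaintext bitmaps never appear separately on the server side. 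Hence, after any sequence of insertions/deletions followed by a search, what the server actually sees is one ciphertext whose plaintext is the \emph{current} bitmap of non-deleted indices, together with a history of update timestamps (addresses), which is exactly what $\mathrm{Updates}(w)$ and $\mathrm{TimeDB}(w)$ describe.

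I would proceed with the following game sequence. \textbf{Game} $G_0$ is the real experiment $\verb"DSSEREAL"$. \textbf{Game} $G_1$ replaces the PRF $\mathcal{F}$ used to derive $\mathtt{k}_{att}$, $\mathtt{K}^{1w}_j,\mathtt{K}^{2w}_j,\mathtt{K}^{3w}_j$, and the ASHE mask stream by lazily-sampled random functions; a standard hybrid bounds $|Pr[G_0]-Pr[G_1]|$ by $q\cdot\mathbf{Adv}^{prf}_{\mathcal{F},\mathcal{B}_1}(1^\lambda)$, where $q$ is the number of distinct PRF inputs. \textbf{Game} $G_2$ replaces every update ciphertext $e^{w}_{up}$ pushed to the server by an ASHE encryption of the all-zero bitstring under a fresh (lazy) key, and similarly initializes the stored $e^w$ as an encryption of $\mathbf{0}$. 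A reduction to the IND-CPA security of ASHE (applied once per distinct $\mathtt{K}^{3w}_j$, as in the forward-privacy proof) bounds $|Pr[G_1]-Pr[G_2]|$ by a term proportional to $\mathbf{Adv}^{\text{IND-CPA}}_{\Sigma,\mathcal{B}_2}(1^\lambda)$. \textbf{Game} $G_3$ replaces the addresses $add^j_w=\mathcal{H}(\mathtt{K}^{1w}_j,\mathbbm{r}^w_j)$ by uniformly random strings, using the indistinguishability property of the chameleon hash (Definition~1): since $\mathtt{K}^{1w}_j,\mathtt{K}^{2w}_j$ are now uniform in $G_1$ and the trapdoor $\mathrm{sk}$ is hidden from $\mathcal{A}$, the distributions are computationally indistinguishable. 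Finally, I construct the simulator $\mathcal{S}$: on $\mathcal{L}^{Updt}$, it samples a random address and pushes a fresh encryption of $\mathbf{0}$; on a search query with leakage $(\mathrm{sp}(w),\mathrm{TimeDB}(w))$, it either reuses the address of a previous query matching $\mathrm{sp}(w)$ or produces a fresh one, and programs the ``decrypted'' result so that the bits corresponding to indices in $\mathrm{TimeDB}(w)$ are $1$ and all other bits are $0$. This last step is possible precisely because the real server only ever returns one aggregated ciphertext per search, and the homomorphic-XOR collapsing of all prior insertions and deletions yields exactly the bitmap of currently-live documents.

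The main obstacle I expect is handling the \emph{revocation} transition cleanly inside the simulation. At revocation, the data owner replaces $\mathbbm{r}^w_j$ by $\mathbbm{r}^w_{j+1}=\mathrm{CH.Forge}(\mathtt{K}^{1w}_j,\mathtt{K}^{1w}_{j+1},\mathbbm{r}^w_j)$ so that the address $add^j_w$ does \emph{not} change, while the keyword keys $\mathtt{K}^{1w}_{j+1},\mathtt{K}^{2w}_{j+1}$ are refreshed and $\mathtt{K}^{3w}_j$ is retained. The simulator must reproduce this invariance without knowing the trapdoor; the collision-resistance property (Definition~2) guarantees that no PPT adversary can produce an inconsistent forged collision, and the indistinguishability property guarantees that the new $(r^w_{j+1},\mathbbm{r}^w_{j+1})$ pair is distributed as in the real game. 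Combining these with the fact that $\mathtt{K}^{3w}_j$ remains the same across revocations (so $e^w$ need not be re-encrypted) shows that the revocation protocol leaks no information beyond $\mathcal{L}^{Updt}$ and $\mathcal{L}^{Srch}$. Summing the bounds across $G_0,G_1,G_2,G_3$ and the simulator step yields
\[
\bigl|Pr[\verb"DSSEREAL"_{\mathcal{A}}^{\Gamma}(1^\lambda)=1]-Pr[\verb"DSSEIDEAL"_{\mathcal{A},\mathcal{S}}^{\Gamma}(1^\lambda)=1]\bigr|\le \mathrm{negl}(1^\lambda),
\]
which establishes adaptive backward privacy for the construction.
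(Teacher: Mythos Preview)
Your core intuition---that the homomorphic XOR collapses all insertions and deletions into a single ciphertext, so the server never sees intermediate bitmaps---is exactly the observation the paper relies on. However, your proposal diverges from both the theorem statement and the paper's proof in one important respect: you enlarge the leakage to $\mathcal{L}^{Srch}(w)=(\mathrm{sp}(w),\mathrm{TimeDB}(w))$ and $\mathcal{L}^{Updt}=\mathrm{Updates}(w)$, whereas the theorem explicitly says ``with the \emph{same} leakage functions,'' i.e.\ $\mathcal{L}^{Srch}(w)=\mathrm{sp}(w)$ and $\mathcal{L}^{Updt}(op,w,ind)=\mathrm{Time}(w)$ from Theorem~\ref{th:fb}. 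The paper's point is precisely that no extra leakage is needed: because the adversary is the server and only ever receives IND-CPA ciphertexts (the stored $e^w$ on search, and $e^w_{up}$ on update), the simulator never has to ``program the decrypted result'' to match $\mathrm{TimeDB}(w)$---it can simply hand back encryptions of $\mathbf{0}$ exactly as in Algorithm~\ref{Sim2}. Your step where the simulator sets bits according to $\mathrm{TimeDB}(w)$ is simulating the \emph{client's} view, not the server's, and is therefore unnecessary; dropping it lets you match the stated leakage and recover the paper's (stronger) claim.

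Beyond that, the paper's proof is literally a three-line sketch that says the simulation ``follows the one from Theorem~\ref{th:fb},'' so your additional hybrids $G_2$ (ASHE) and $G_3$ (chameleon-hash indistinguishability), together with your treatment of revocation, go well past what the paper actually writes. These are reasonable refinements---in particular the chameleon-hash and revocation arguments address details the paper simply omits---but they are your additions, not a reconstruction of the paper's argument. If your goal is to match the paper, collapse your proof to: (i) observe that updates do not reveal $op$ or $ind$ because $e^w_{up}$ is an IND-CPA ciphertext and the address $add^j_w$ is unchanged, (ii) observe that search reveals only $\mathrm{sp}(w)$ and an encrypted aggregate, and (iii) invoke the simulator of Theorem~\ref{th:fb} verbatim.
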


\begin{proof}
(Sketch) The construction B does not leak the type of update (either add or del) on encrypted file indices since it has been encrypted. 
Moreover, it does not leak file indices that have been previously added and/or deleted. 
The construction B is backward secure since the leakage is the same as in Theorem \ref{th:fb}.
The simulation follows the one from Theorem \ref{th:fb}.
\end{proof}
\end{document}